\title{%Selfish Mining Attacks Enhanced by Elastic Hash Supply\\
Selfish Mining Attacks Exacerbated by\\Elastic Hash Supply}
\author{%Submission \#233}
 Yoko Shibuya \inst{1} \and
 Go Yamamoto \inst{1} \and
 Fuhito Kojima \inst{1} \and
 Elaine Shi\thanks{This work was performed while the author was consulting for NTT Research during summer 2020.} \inst{2} \and
 \mbox{Shin'ichiro Matsuo} \inst{1} \inst{3} \and
 Aron Laszka \inst{4}}
\authorrunning{%Submission \#233}
Shibuya et al.}
\institute{
 NTT Research \and
 Cornell University  \and Georgetown University \and
 University of Houston}
\newcommand{\rp}{B_\text{attacker}}
\newcommand{\ro}{B_\text{honest}}
\let\xthefootnote\thefootnote
\newcommand{\memo}[2][red]{\def\thefootnote{\color{#1}\xthefootnote}\footnote{\color{#1}#2}}
\newcommand{\void}[1]{}
\newtheorem{prop}{Proposition}
\begin{document}
\setlength{\marginparwidth}{4.3cm}
%\void{
% {
% This extra page fixes the issue with the click-to-source feature due to margin notes. It should be removed right before submission.
% \clearpage
% \setcounter{page}{1}
% }

\maketitle

\begin{center}
Published in the proceedings of the 25th International Conference on\\Financial Cryptography and Data Security (FC 2021).
\end{center}

\begin{abstract}
\vspace{-1em}
% The abstract should briefly summarize the contents of the paper in 15--250 words.
Several attacks have been proposed against Proof-of-Work blockchains, which may increase the attacker's share of mining rewards (e.g., selfish mining, block withholding).
A further impact of such attacks, which has not been considered in prior work, 
is that decreasing the profitability of mining for honest nodes
incentivizes them to stop mining or to leave the attacked chain for a more profitable one.
The departure of honest nodes exacerbates the attack and may further decrease profitability and incentivize more honest nodes to leave. % further, leading to more miners leaving and 
%This may in turn lead to more miners leaving, 
%exacerbating the attack and further decreasing profitability.
%
In this paper, we first present an empirical analysis showing that there is a statistically significant correlation between the profitability of mining and the total hash rate, confirming that miners indeed respond to changing profitability.
Second, we present a theoretical analysis showing that selfish mining under such elastic hash supply leads either to the collapse of a chain, i.e., all honest nodes leaving, or to a stable equilibrium depending on the attacker's initial share.
\vspace{-0.4em}
\keywords{Blockchain  \and Selfish mining \and Hash supply \and Proof of Work.}
\end{abstract}

\section{Introduction}\label{sec:Introduction}
\vspace{-0.5em}

When blockchains were first introduced, it was believed that profitable attacks require at least 50\% of the total mining power. However, 
several attacks have been found to go against proof-of-work blockchains, such as selfish mining~\cite{eyal2014majority} and block withholding against mining pools~\cite{eyal2015miner}. A common goal of many such attacks is, at a high level, to increase the attacker's share of the mining rewards by reducing other miners' effective mining power. Prior work found that such attacks may be profitable even if the attacker's original share of the total mining power is less than 50\%.

An important limitation of prior work is that they do not consider how honest miners react to changes in profitability when attacks occur. Most models assume that the total hash supply in a chain is \emph{fixed} and does not respond to changes in the profitability of the chain. In practice, however, most miners are profit-oriented and choose which currency to mine (or to not mine at all) based on their \mbox{profitability}. \void{ If honest miners' profit-oriented behavior is incorporated into a model when attacks reduce honest miner's profitability, we should expect honest miners to leave for other more profitable chains (or to stop mining).  The exit of honest miners increases the attacker's share, which further decreases profitability for honest miners. The negative feedback continues until the honest miner's profitability converges to the original level, or all the honest miners leave the chain.} 

In this paper, we first document real-world evidence of miner's profit-oriented behavior,  using data from three different cryptocurrencies. We found a positive and statistically significant correlation between total hash supply and per-hash mining revenue, i.e., the evidence of  \emph{elastic} hash supply with respect to miners' revenue. We then provide a new analysis of selfish mining that takes into account the elasticity of hash supply. \void{We give an overview
of our analyses and findings below before diving into subsequent, detailed sections. }
In an elegant work by Huberman et al.~\cite{huberman2019economic}, the authors point out that Bitcoin mining is a free-entry, two-sided market. If there is a profit to be made, more miners will enter, which will then trigger the difficulty adjustment algorithm, making mining more difficult, and thus everyone's expected mining revenue decreases. In the equilibrium state, miners break even, i.e., the revenue that they earn from mining is equal to their cost. 
Inspired by this principle, we incorporate a free-entry condition in a model of selfish mining, and thus 
our analysis essentially characterizes the long-term effects of selfish mining on the eco-system in the equilibrium state. 

To understand our analysis, let us first quickly review 
the classical analysis of the selfish mining attack: when a coalition (e.g., a mining pool) mines a new block ${\sf B}^*$ off the current longest chain denoted ${\sf chain}$, it does not release ${\sf B}^*$ immediately but withholds ${\sf B}^*$. Whenever an honest miner mines a block ${\sf B}$ also off ${\sf chain}$, the adversary releases the withheld block ${\sf B}^*$ immediately, and races in transmitting its block ${\sf B}^*$ to other miners. If the adversary has good control of the network (e.g., it controls some relays in the network) and can transmit its block ${\sf B}^*$ ahead of the honest block ${\sf B}$, it can convince other miners to mine off ${\sf B}^*$. In this case, the honest miners' work in mining ${\sf B}$ got erased.  In other words, through selfish mining, an adversary controlling a coalition can erase some fraction of the honest mining power, and therefore the selfish coalition can gain an unfair share of the total rewards. In the worst case, assuming that the coalition can reliably win the race in transmission, then a 1/3 coalition can erase 1/3 of the honest mining power, and thus gain 1/2 of the rewards.

The above classical analysis assumes that the total hash power participating in mining is fixed.
Now let us consider what happens when honest miners may respond to profitability and freely enter and leave the system. 
During a selfish mining attack, because a fraction of the honest mining power is being erased, the erased fraction is essentially not gaining rewards. The immediate effect is that the cost of mining to earn each unit of reward becomes proportionally higher for honest miners; and if the honest miners' profitability plunges below zero, they start to leave the system. As honest miners leave, the impact of the attack on the remaining miners is magnified as a higher fraction of their mining power is now erased, which in turn drives more miners away. At the same time, as honest miners leave, the total mining power decreases. Therefore, the mining difficulty drops, and thus mining becomes cheaper---this second effect somewhat counteract the decreased profitability for honest miners that stems from being the victim of selfish mining.  

When hash power is elastic, what happens in the equilibrium state is driven simultaneously by the above two opposite effects. We show that for a wide range of parameter regimes, the first effect dominates and leads to a ``collapse scenario''---specifically, selfish mining drives costs up for honest miners, and {\it all} honest miners end up leaving the system as a result. In some other parameter regimes, however, because the two effects somewhat counteract each other, the system reaches
a new equilibrium after some but not all honest nodes have left.
In either scenario, the unfairness of selfish mining is significantly exacerbated by the elasticity of hash power.

%We next incorporate elastic hash supply into a model of selfish mining. The model shows that (1) \emph{if the attacker's initial share of the mining power is below a threshold, the miners can reach a stable equilibrium}, and (2) otherwise, all honest miners will leave and the chain collapses. In either scenario, part or all honest mining power ends up leaving the system, and thus the effects of selfish mining are exacerbated by the elastic hash supply. Our theoretical results can be applied to many other attacks (e.g., block withholding attacks), and thus have critical implications on incentive compatibility of cryptocurrencies.

The rest of the paper is organized as follows. \cref{s:rusing-selifish-miner} explains the intuition of our main result using a simple model of selfish mining.  \cref{sec:Emprical} shows our empirical evidence on the elasticity of hash supply, which motivates our model setting. \cref{sec:Theory} describes our theoretical model of selfish mining under elastic hash supply. \cref{sec:Conclusions} concludes the paper.

\section{\textcolor{black}{Analysis for a ``Rushing'' Selfish Miner}}\label{s:rusing-selifish-miner}

To explain the high-level intuition, we first give a simple analysis for the special case when the selfish miner always succeeds in the network race, i.e., if an honest player sends a block in some round, the adversary can always rush-send an equal-length block, and preempt the honest player's block.

Suppose that in the beginning, the total normalized hashpower is $1$, everyone is mining honestly and the system is in an equilibrium, i.e., everyone breaks even.
Without loss of generality, we may assume that the normalized cost to mine a block is $1$, and the per-block reward is also $1$.
Now, assume that $\alpha \in (0, 1)$ fraction of the hashpower colludes and selfish-mines, while the remaining $1-\alpha$ fraction is honest. We shall assume that the selfish miner's hashpower stays fixed.
As mentioned in Section~\ref{sec:Introduction}, 
for every block mined by the selfish miner, it can erase one block mined by an honest miner.
Such a selfish mining attack effectively erases a subset
of the honestly mined blocks and drives up the cost of mining for honest players. 
As honest players lose money, a subset of them will leave.
Suppose that $\beta < 1-\alpha$ amount of honest hashpower 
leaves.
The remaining honest hashpower is therefore $1-\alpha-\beta$. 

Now, if $\alpha \geq 1-\alpha-\beta$, this means that the selfish miner can erase all remaining hashpower, and thus honest miners earn nothing and they would all leave.
%[FILL: HOW DO WE interpret this case?]
We want to see if there might be any $\beta$ that causes the system to reach a new equilibrium where some positive number of honest miners stay in the system. Therefore, 
henceforth we may assume $\alpha < 1-\alpha-\beta$, i.e, $1-2\alpha-\beta > 0$.
In this case, the total effective mining power
is $1-\alpha-\beta$ taking into account the fact
that $\alpha$ amount of honest hashpower gets erased by the selfish miner.
The blockchain will auto-adjust its difficulty such that the average block-interval is a constant. Thus, the normalized cost of 
mining a block now becomes $1-\alpha-\beta$, and recall that the reward per block is still $1$.
Since $\alpha$ amount of honest hashpower gets erased by the selfish miner, effectively only $\frac{1-2\alpha-\beta}{1-\alpha-\beta}$ fraction of the honestly mined blocks get paid.
Thus, the total profit of the honest players can be represented
by the following expression:
\begin{equation}
\frac{1-2\alpha-\beta}{1-\alpha-\beta} - (1-\alpha-\beta)
= \frac{\beta-(\alpha + \beta)^2}{1-\alpha-\beta}
\label{eqn:honestprofit}
\end{equation}
If the system were to enter a new equilibrium after $\beta$ honest hashpower leaves, then the above expression, i.e., Equation~\ref{eqn:honestprofit} 
should be exactly 0.
We can draw some interesting conclusions from Equation~\ref{eqn:honestprofit}:
\begin{enumerate}
    \item {\bf Collapse regime.} If $\alpha > \frac{1}{4}$, it must be that $\beta - (\alpha + \beta)^2 < 0$ for any choice of $\beta$. In this case, it means that no matter what $\beta$ is, honest players end up losing money and they will eventually all leave, and the systems collapses.
    \item {\bf Some honest hashpower remains.}
    If $\alpha \leq \frac14$, some honest hashpower remains and the system enters a new equilibrium. Consider the special case when $\alpha = \frac14$. In this case, the system reaches a new equilibrium when $\beta = \frac14$ honest hashpower leaves. At this moment, one can verify that the effective total hashpower is $\frac12$, and thus the cost of mining a block becomes $\frac12$. However, half of the remaining honest hashpower gets erased by the $\alpha = \frac14$ selfish-mining, and thus honest players break even.
\end{enumerate}

No matter which case, the unfairness caused by selfish mining becomes exacerbated due to elastic hashpower.

\vspace{-0.25em}
\section{Empirical Findings}\label{sec:Emprical}
\vspace{-0.5em}

This section presents new empirical facts on the elasticity of hash supply with respect to the miners' revenue.
Only few pieces of literature have worked on measuring elasticity of hash supply \cite{noda2020economic}, but our paper distinguishes itself from the prior literature in terms of length and coverage of time-series data. We study the elasticity of hash supply with respect to miners' revenue using data from 3 different currencies (Bitcoin, Ethereum, and Ethereum Classic) from 2015-2020. We use various time-detrending methods from macroeconomics to deal with technological advancements in cryptocurrencies and related time trends in variables. Using time-detrending methods allows us to study longer time periods.

We start by explaining the data and empirical strategy that we use in our study (\cref{subsec:Data_strategy}). We then show the regression results in \cref{subsec:Empirical_result}.

\vspace{-0.25em}
\subsection{Data and Empirical Strategy}\label{subsec:Data_strategy}
\vspace{-0.5em}

\subsubsection{Data}

We downloaded cryptocurrency data from three sources: Bitcoin data from Quandl, Ethereum data from Etherscan, and Ethereum Classic data from crypto-ethereum-classic public library on BigQuery. We use three variables in our regression analysis: daily price, network difficulty, and total hash rate of each cryptocurrency. Different currencies have different lengths of history, and thus we use data from 2017/1/1 to 2020/7/31
for Ethereum and Ethereum Classic, and from 2015/1/1 to 2020/7/31 for Bitcoin. We computed daily per-hash revenue from coinbase using daily price and network difficulty (and data on cryptocurrency halving). We focus on miner's revenue from coinbase and not from transaction fees because transaction fees have been randomly fluctuating over the recent years in these cryptocurrencies.

\subsubsection{Time Detrending Methods}
Technological advancements in cryptocurrency mining over the past 10 years pose a challenge for regression analysis since they add significant time trends to time-series variables. While we would like to observe the change in total hash rate in response to the change in miners' revenue in the short-horizon (e.g. if miners' revenue increased by 1\% today, how much will total hash rate change?), if we measure the correlation between two variables without detrending the time-series data, it will overestimate the correlation because the measured correlation mainly comes from the long-term correlation due to the technological advancements. In order to deal with the trend issue, we apply time-detrending filters that are commonly used in macroeconomics. Time-detrending filters allow us to separate the slow-moving trend component from the shorter-horizon cyclical fluctuations (\cite{hamilton2020time}). By applying time-detrending filters to time-series data of total hash rate and miners' revenue, we can eliminate the long-term trend components in the data such as technology advancements and regulatory developments. We apply three types of time-detrending filters that are commonly used in macroeconomics:  Hodrick-Prescott (HP),  Baxter-King (BK), and  Christiano-Fitzgerald (CF) filters.\footnote{For HP filter, we use $\lambda=10,000$. For BK filter we use (7, 90, 12) for high, low frequencies, and lead-lag length, respectively. For CF filter, we use (7, 90) for high and low-frequency length.}

Figure~\ref{f:Empirical_time_detrending} shows the decomposition of the logarithm of the total hash rate in the Bitcoin network over the past three years, using Hodrick-Prescott filter. The total hash rate of the Bitcoin network has an increasing trend over this time period, and the filter removes out the trend. In the later regression analysis, we use the cycle components of the variables after applying filters. We report regression results based on multiple time-detrending methods.

\begin{figure}[h!]
	\centering
	\includegraphics[width=0.7\textwidth]{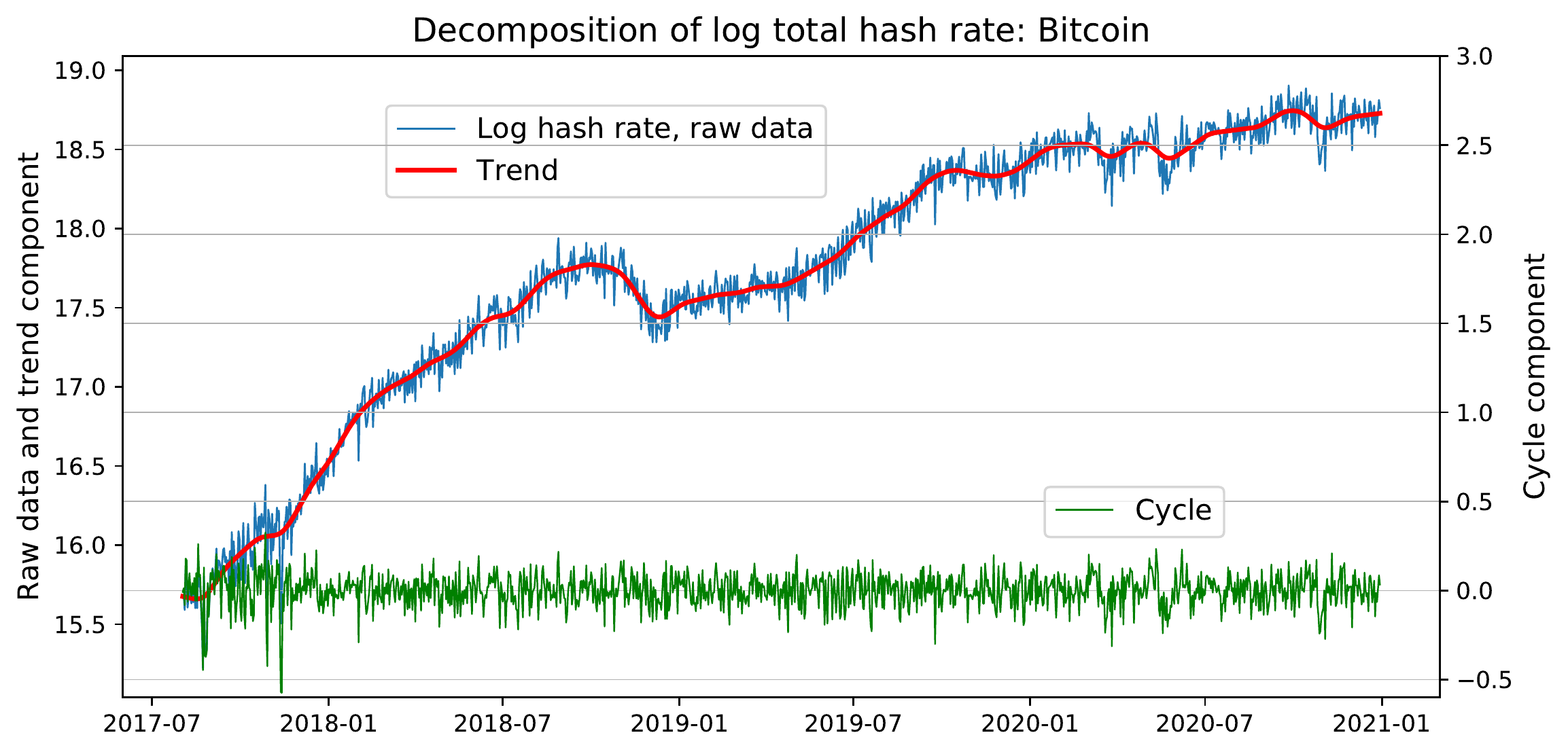}
	\caption{Application of HP filter to raw hash-rate data from Bitcoin.} 
	\label{f:Empirical_time_detrending}
	\vspace{-1em}
\end{figure}
\subsubsection{Regression Equations} To estimate the elasticity of total hash rate with respect to per-hash revenue, we consider the following regression equation:
\begin{equation}
    \Delta \log \text{THR}_{i,t} = \alpha_i \Delta \log \text{MRC}_{i,t} + \epsilon_{i,t},
\end{equation}
where THR stands for total hash rate, MRC stands for miners' per-hash revenue from coinbase. Parameter $i$ is an index representing the cryptocurrency (Bitcoin, Ethereum, or Ethereum Classic), and $t$ is an index for time (day). Variables with $\Delta$ are cycle components of the logged variables.\footnote{For regressions with Ethereum Classic data, we use daily difference in total hash rate as an independent variable. The reasons for this is that total hash rate of Ethereum Classic is volatile at high frequency, and does not exhibit any time trend over the sample period.} We include year-month fixed effect in the regression to take out some year/month fixed events such as regulation changes, which might not be taken out by time-detrending filters.

\vspace{-0.5em}
\subsection{Results}\label{subsec:Empirical_result}
\vspace{-0.5em}

Table \ref{tab:regression_1} summarizes the results of running the above regression for the three cryptocurrencies. The main result of the regression analysis is that with any type of time-detrending filter, in any time period, and for any currency, the coefficients on $\Delta \log{}$MRC are \emph{positive and statistically significant}. In other words, the total hash rate is elastic with respect to the miners' per-hash revenue from coinbase. The magnitude of the coefficient varies across different time-detrending methods and different currencies, but the elasticity ranges from $0.028$ to $0.183$. One percentage change in the miners' per-hash revenue from coinbase causes 0.027 to 0.183 percentage change in the total hash rate.

Regression with a longer sample period for Bitcoin data gives us a more interesting result. Table \ref{tab:regression_2} summarizes the regression results for Bitcoin data with different sample periods. Interestingly, elasticity is higher and more statistically significantly positive in the recent period (2018--2020) compared to the beginning of the sample period (2015--2017). This shows the possibility that the hash rate becomes more responsive to the miners' revenue as a currency grows.
\begin{table}[h!]
\centering
\caption{Regression results for three currencies in sample period 2017/1/1--2020/7/31}
\label{tab:regression_1}
\fontsize{3.5mm}{5mm}
\selectfont
\resizebox{\textwidth}{!}{%
\begin{tabular}{cccccccccc}
\hline
    & \multicolumn{3}{c}{\textbf{Bitcoin}} & \multicolumn{3}{c}{\textbf{Ethereum}} & \multicolumn{3}{c}{\textbf{Ethereum Classic}} \\
    & HP      & BK      & CF      & HP       & BK      & CF      & HP         & BK         & CF         \\ \cline{2-10} 
$\Delta\log{}$MRC &
  \begin{tabular}[c]{@{}c@{}}$0.175^{***}$\\ (5.53)\end{tabular} &
  \begin{tabular}[c]{@{}c@{}}$0.183^{***}$\\ (8.83)\end{tabular} &
  \begin{tabular}[c]{@{}c@{}}$0.181^{***}$\\ (1.30)\end{tabular} &
  \begin{tabular}[c]{@{}c@{}}$0.028^{***}$\\ (3.69)\end{tabular} &
  \begin{tabular}[c]{@{}c@{}}$0.033^{***}$\\ (5.08)\end{tabular} &
  \begin{tabular}[c]{@{}c@{}}$0.079^{***}$\\ (12.54)\end{tabular} &
  \begin{tabular}[c]{@{}c@{}}$0.041^{***}$\\ (3.20)\end{tabular} &
  \begin{tabular}[c]{@{}c@{}}$0.048^{***}$\\ (3.12)\end{tabular} &
  \begin{tabular}[c]{@{}c@{}}$0.027^{***}$\\ (2.57)\end{tabular} \\
 No. of obs. & 1308    & 1296    & 1308    & 1308     & 1296    & 1308    & 1308       & 1296       & 1308  \\ \hline     
\multicolumn{10}{l}{\textsuperscript{***}$p<0.01$, 
  \textsuperscript{**}$p<0.05$, 
  \textsuperscript{*}$p<0.1$, t-values in parentheses.}
\end{tabular}
}
\end{table}
\begin{table}[h!]
\centering
\caption{Regression results for Bitcoin data with three different sample periods}
\label{tab:regression_2}
\fontsize{3.5mm}{5mm}
\selectfont
\resizebox{\textwidth}{!}{%
\begin{tabular}{cccccccccc}
\hline
    & \multicolumn{3}{c}{\textbf{2015/1 - 2017/12}} & \multicolumn{3}{c}{\textbf{2018/1 - 2020/7}} & \multicolumn{3}{c}{\textbf{2015/1 - 2020/7}} \\
    & HP      & BK      & CF      & HP       & BK      & CF      & HP         & BK         & CF         \\ \cline{2-10} 
$\Delta\log{}$MRC &
  \begin{tabular}[c]{@{}c@{}}$0.082^{*}$\\ (2.02)\end{tabular} &
  \begin{tabular}[c]{@{}c@{}}$0.108^{***}$\\ (3.83)\end{tabular} &
  \begin{tabular}[c]{@{}c@{}}$0.078^{***}$\\ (3.62)\end{tabular} &
  \begin{tabular}[c]{@{}c@{}}$0.163^{***}$\\ (4.85)\end{tabular} &
  \begin{tabular}[c]{@{}c@{}}$0.152^{***}$\\ (6.88)\end{tabular} &
  \begin{tabular}[c]{@{}c@{}}$0.194^{***}$\\ (11.76)\end{tabular} &
  \begin{tabular}[c]{@{}c@{}}$0.126^{***}$\\ (4.80)\end{tabular} &
  \begin{tabular}[c]{@{}c@{}}$0.132^{***}$\\ (7.38)\end{tabular} &
  \begin{tabular}[c]{@{}c@{}}$0.143^{***}$\\ (10.65)\end{tabular} \\
No. of obs. & 1096    & 1084    & 1096    & 943     & 931    & 943    & 2039       & 2015       & 2039  \\ \hline    
\multicolumn{10}{l}{\textsuperscript{***}$p<0.01$, 
  \textsuperscript{**}$p<0.05$, 
  \textsuperscript{*}$p<0.1$, t-values in parentheses.}
\end{tabular}
}
\end{table}

\vspace{-0.5em}
\section{Model with Elastic Hash Supply}\label{sec:Theory} 
\vspace{-0.5em}

This section introduces our model of selfish mining with elastic hash supply. \void{Under elastic hash supply, total hash supply is determined endogenously by \emph{free entry} condition (see an elegant work by \cite{huberman2019economic} that pointed out that Bitcoin mining is a free-entry two-sided market). Free entry means that miners keep entering the system as long as there is a positive profitability, while they keep leaving the system as long as the profit is negative; thus, the profitability in the system always reaches  zero in equilibrium. } We first explain our baseline model without selfish mining and illustrate how total hash rate is determined endogenously in an equilibrium by free-entry condition. We then analyze the model with selfish mining, building on the seminal work by Eyal and Sirer \cite{eyal2014majority}. Lastly, we discuss the stability of equilibria. % at the end of this section.

In our model of selfish mining with elastic hash supply, the equilibrium state is determined by the two opposing effects. An attack increases the cost of mining for honest miners and thus makes honest miners leave. At the same time, when some honest miners leave, the total mining power decreases and so does the cost of mining for honest miners. 
\footnote{\textcolor{black}{In practice, as miners leave and the total mining power decreases, the price of the cryptoccurrency may drop, thereby decreasing the revenue of the remaining honest miners. Similar to the attacker's increasing share due to miners leaving, this effect exacerbates the impact of the attack, and in this sense, magnifies the phenomenon that we identified in this paper. We leave the modeling and formal analysis of this effect to future work.}}
Which effect dominates depends on the attacker's initial share of mining power. We derive a threshold for the attacker's initial share such that (a) if the attacker's share is below the threshold, the system has a stable equilibrium with a positive hash supply by honest miners; and (b) if the attacker's share is above the threshold, all honest miners leave and the system collapses. In either case, some or all honest miners leave the system, and thus the effect of selfish mining is significantly exacerbated under elastic hash supply. 

\textcolor{black}{
\subsubsection{Notations}
The following notations and basic assumptions are employed:
\begin{itemize}
    \item $B =$ expected reward for a new block, including both the coinbase and the transaction fee. For example, as of December 2020, $B$ is 6.25 BTC $\approx$ 169,441~USD coinbase plus transaction fees for Bitcoin.
    \item $C =$ expected cost of mining per unit of hash-rate until some miner finds a new block. This includes electricity costs, depreciation, and other operational costs.  We assume that a miner's cost is proportional to its hash rate, and the cost per unit of hash-rate is the same for all miners. 
    For example, $C$ is the cost per unit of hash-rate for 10 minutes for Bitcoin.  On the online marketplace NiceHash (\texttt{nicehash.com}), as of December 2020, the lowest-price offer for 1 PH/s of mining power\footnote{We use PH/s and EH/s to denote peta-hash per second and exa-hash per second.} for 24 hours is 0.0069 BTC. From this, we can estimate $C$ as $0.0069~\text{BTC / (PH/s) / 24 hours} \cdot 10~\text{minutes} \approx 1.31~\text{USD / (PH/s)}$. 
    \item $H =$ honest miners' hash rate in total.
    \item $M =$ attacking pool's hash rate.
\end{itemize}
}

\subsubsection{Baseline Model Without Selfish Mining}\label{sub:Baseline_model}
We consider a system with a group of honest miners (with mining power $H$) and an attacking pool (with mining power $M$). 
\void{We let $B$ and $C$ denote block rewards and per-hash mining cost, respectively.} We assume \emph{elastic hash supply} in the system: the equilibrium mining power of honest miners ($H^*$) is determined such that honest miners make zero profit with mining power $H^*$. The attacking pool's mining power ($M$), block rewards ($B$) and cost ($C$) are assumed to be fixed and to satisfy $M<B/C$.
\void{
\memo{
List of notations
\begin{itemize}
    \item $B$[USD/block]: Mining reward per block including both the coinbase and the transaction fees.
    \item $C$[USD/(EH/s)/block]: Mining cost per block per hash rate, including electricity cost, depreciation and so on.
    \item $H$[EH/s]: The honest miner's hash rate in total.
    \item $M$[EH/s]: The attacking pool's hash rate.
    \item $B_{\text{honest}}$[USD/block]: The honest miners' total expected mining reward per block discovery including hidden ones.
    \item $B_{\text{attacker}}$[USD/block]: The honest attacker's expected mining reward per block discovery including hidden ones.
    \item ${\cal{U}}^S(H)$[USD/(EH/s)/block]: Per hash profit per block discovery \underline{excluding} hidden ones after difficulty adjustment.
\end{itemize}
}
}

Without selfish mining attack, the honest miners' profit per unit hash rate
is
%of honest miners in the system 
%can be expressed as
%
\begin{equation}
    \mathcal{U}^N(H) = B \frac{1}{H + M} - C.
\end{equation}\label{eq:profit_honest_without_attack}
In an equilibrium, the elastic hash supply assumption implies $\mathcal{U}^N(H^*)=0$. We can solve for $H^*$:
\begin{equation}
H^* = \frac{B}{C} - M > 0
\end{equation}

\subsubsection{Model With Selfish Mining}\label{sub:Model_with_selfish_mining}

Now, we assume that the attacking pool performs selfish mining as defined by \cite{eyal2014majority}.
\footnote{It is well-known that selfish mining attack proposed by \cite{eyal2014majority} is not the optimal attacker strategy, and thus please note that the actual equilibrium may be different.}
We can calculate the expected mining reward per block discovery, including the hidden block discoveries, as
%
%the revenue of the attacking pool ($\rp$) and of the honest miners ($\ro$) from state probabilities and transition frequencies as follows (see \cite{eyal2014majority} for detail):
$$
\rp = B \frac{\left(-2\,\alpha^4+5\,\alpha^3-4\,\alpha^2+\alpha\right)\,\gamma+4\,\alpha^4-9\,\alpha^3+4\,\alpha^2}{2\,\alpha^3-4\,\alpha^2+1}
$$
for the attacking pool, and 
$$
\ro = B \frac{\left(2\,\alpha^4-5\,\alpha^3+4\,\alpha^2-\alpha\right)\,\gamma-4\,\alpha^4+10\,\alpha^3-6\,\alpha^2-\alpha+1}{2\,\alpha^3-4\,\alpha^2+1}
$$
for the honest miners, where we denote by $\alpha = \frac{M}{H+M}$ the fraction of the attacking pool's mining power out of the total mining power, and by $\gamma$ the ratio of honest miners that
choose to mine on the attacking pool's block. 
The total effective mining power in the system under attack is $(\ro+ \rp) (H+M) /B$. \footnote{These calculations should coincide $\rp = B \cdot r_\text{pool}$ and $\ro = B \cdot r_\text{others}$, where $r_\text{pool}$ and $r_\text{others}$ are from Equations (6) and (7) in \cite{eyal2014majority}.}
The honest miners' effective mining power is given by $\ro (H+M)/B = \frac{\ro}{(1-\alpha) B} H$ and the attacking pool's effective mining power is $\rp (H+M) /B = \frac{\rp}{\alpha B} M$.

Then, the honest miners' per hash-rate profit under selfish mining attack is 
\begin{equation}
    \mathcal{U}^S(H) = B \frac{\ro}{(1-\alpha)B} \frac{B}{ (\rp+\ro)(H + M)} - C.
\end{equation}
In an equilibrium, honest miners' hash supply is again derived from $\mathcal{U}^S(H^*)=0$:
\begin{align}
\mathcal{U}^S(H^*) 
% &=  %B \frac{\ro}{(1-\alpha^*) (\rp+\ro)}  \frac{1}{H^* + M} - C \\
 %&
 = B\frac{1}{M} \left\{\frac{\alpha^* \cdot \ro(\alpha^*)}{(1-\alpha^*) (\rp(\alpha^*)+\ro(\alpha^*))} - \kappa\right\} %\\
 %&
 = 0 \label{eq: equilibrium}
\end{align}
for $\alpha^* = \frac{M}{H^* + M}$ and $\kappa = M \cdot \frac{C}{B}$. 

A natural question is whether the above equilibrium condition has a solution $H^*>M$.  If not, then the system cannot find an equilibrium where honest miners stay in the system under selfish mining attack.   This simple theorem answers that the attacker's hash rate must be bounded to avoid collapsing the system.

\begin{theorem}\label{theorem:threshold_initial_share}
For any given $\gamma$, there exists $M_\text{max}$ such that a solution $H^*$ of ${\cal{U}}^S(H^*) = 0$ with $H^* > M (> 0)$ exists if and only if $ M \le M_\text{max}$.
\end{theorem}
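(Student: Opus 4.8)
The plan is to treat the attacker fraction $\alpha^* = M/(H^*+M)$, rather than $H^*$ itself, as the unknown. Since $H \mapsto M/(H+M)$ is a strictly decreasing bijection from $(M,\infty)$ onto $(0,\tfrac{1}{2})$, finding $H^* > M$ with $\mathcal{U}^S(H^*)=0$ is equivalent to finding $\alpha^* \in (0,\tfrac{1}{2})$ solving the equilibrium condition \eqref{eq: equilibrium}. Because the prefactor $B/M$ is nonzero, that equation reduces to $\Phi(\alpha^*) = \kappa$, where
\[
\Phi(\alpha) := \frac{\alpha\,\ro(\alpha)}{(1-\alpha)\bigl(\rp(\alpha)+\ro(\alpha)\bigr)} \quad\text{and}\quad \kappa = \frac{MC}{B}.
\]
The common factor $B$ in $\ro$ and $\rp$ cancels, so $\Phi$ depends only on $\alpha$ and the fixed parameter $\gamma$; in particular it does not depend on $M$. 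Hence the whole question becomes: for which $M$ does the constant $\kappa = MC/B$ lie in the image $\Phi\bigl((0,\tfrac{1}{2})\bigr)$?

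Next I would establish the shape of $\Phi$ on $(0,\tfrac{1}{2})$. First, the denominator $2\alpha^3-4\alpha^2+1$ of the reward expressions stays strictly positive on $[0,\tfrac{1}{2}]$ (it decreases from $1$ to $\tfrac{1}{4}$, as its derivative $6\alpha^2-8\alpha$ is nonpositive there), so $\ro,\rp$ and $\Phi$ are well-defined and continuous. Second, $\ro(\alpha)>0$ and $\rp(\alpha)+\ro(\alpha)>0$ for $\alpha\in(0,\tfrac{1}{2})$, so $\Phi>0$ on the open interval. Third, and this is the crux, $\Phi$ vanishes at both endpoints: the explicit factor $\alpha$ forces $\Phi(\alpha)\to 0$ as $\alpha\to 0^+$, while a direct substitution gives $\ro(\tfrac{1}{2})=0$ (the honest reward is fully erased when $\alpha=\tfrac{1}{2}$, matching the rushing analysis of \cref{s:rusing-selifish-miner}), so $\Phi(\alpha)\to 0$ as $\alpha\to\tfrac{1}{2}^-$. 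These computations are routine but must be carried out from the rational expressions for $\ro,\rp$ and checked for every $\gamma\in[0,1]$.

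With those properties in hand the conclusion is immediate. Extending $\Phi$ continuously to $[0,\tfrac{1}{2}]$ by $\Phi(0)=\Phi(\tfrac{1}{2})=0$, it is continuous, strictly positive on the interior, and zero at the ends, hence it attains a maximum $\Phi_\text{max} := \max_{\alpha\in[0,1/2]}\Phi(\alpha) > 0$ at an interior point. Since the continuous image of the connected set $(0,\tfrac{1}{2})$ is an interval whose supremum $\Phi_\text{max}$ is attained but whose infimum $0$ is not, we get $\Phi\bigl((0,\tfrac{1}{2})\bigr) = (0,\Phi_\text{max}]$. Setting $M_\text{max} := B\,\Phi_\text{max}/C$, and noting that $\kappa = MC/B > 0$ for every $M>0$, the reduced equation $\Phi(\alpha^*)=\kappa$ has a solution in $(0,\tfrac{1}{2})$ if and only if $0 < MC/B \le \Phi_\text{max}$, i.e.\ if and only if $M \le M_\text{max}$. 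Translating back through the bijection yields the claimed $H^*>M$, which proves the theorem.

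The main obstacle is the second paragraph: verifying, directly from the rational reward formulas, that $\Phi$ is positive on $(0,\tfrac{1}{2})$ and vanishes at both endpoints (especially $\ro(\tfrac{1}{2})=0$) uniformly in $\gamma$. Once that endpoint-and-sign behaviour is pinned down, the existence-and-threshold statement is just an intermediate-value argument. As a sanity check, one may further observe that $\tfrac{\alpha}{1-\alpha}<1$ and $\tfrac{\ro}{\rp+\ro}<1$ on $(0,\tfrac{1}{2})$, so $\Phi_\text{max}<1$ and therefore $M_\text{max}<B/C$, consistent with the standing assumption $M<B/C$.
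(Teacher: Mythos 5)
Your proof is correct and follows essentially the same route as the paper's: your $\Phi$ is exactly the paper's $f(\alpha) = \frac{\alpha\,\ro(\alpha)}{(1-\alpha)(\rp(\alpha)+\ro(\alpha))}$, your $M_\text{max} = B\,\Phi_\text{max}/C$ is the paper's threshold, and both arguments rest on continuity of $f$ on $[0,\tfrac{1}{2}]$, the endpoint zeros $f(0)=f(\tfrac{1}{2})=0$, attainment of an interior maximum, and an intermediate-value argument pulled back through the bijection $\alpha = M/(H+M)$. The only difference is presentational: you verify explicitly what the paper asserts as properties (A)--(C) (positivity of the denominator $2\alpha^3-4\alpha^2+1$ and the two distinct reasons for the endpoint zeros), which is a welcome but not structurally different addition.
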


\begin{proof}
\textcolor{black}{
$\rp(\alpha)$ and $\ro(\alpha)$ have the following properties:
(A) $\ro(\alpha)$ and $\rp(\alpha)$ are continuous for $0\le \alpha\le 1/2$,
(B)  $\ro(1/2) = 0$, 
and (C) $\ro(\alpha) + \rp(\alpha) > 0$ for all $0\le\alpha\le 1/2$.
Let us define a function %$f(\alpha)$ as
$
f(\alpha) =  \frac{\alpha \cdot \ro(\alpha)}{(1-\alpha) (\rp(\alpha)+\ro(\alpha))}
$
. 
First, $f(\alpha)$ is continuous for $0\le \alpha \le 1/2$ because of property (A) and property (C).
Since $f$ is continuous, there exists $\alpha_\text{max}\in [0, 1/2]$ that achieves the maximum of $f(\alpha)$ for $0\le \alpha \le 1/2$.  Let $M_\text{max} = \frac{B}{C}f(\alpha_\text{max})$.
If $M_\text{max} <M$, then  ${\cal{U}}^S(H) <0 $ for all $H$ such that $H>M$, so solution $H^*$ does not exist.
To complete the proof it suffices to find a solution $H^* >M$ of ${\cal{U}}^S(H) = 0$ for constant $M$ that satisfies $0< M \le M_\text{max}$.  There exists some $\alpha^* \in (0, 1/2)$ such that $f(\alpha^*) = \frac{C}{B}M$ because $f$ is continuous, $f(0) =f(1/2) = 0$ by property (B), and $0 < \frac{C}{B}M \le \frac{C}{B}M_\text{max}=f(\alpha_\text{max})$. We find $H^*$ by solving $\alpha^* = \frac{M}{H^* +M}$, and $H^* > M$ because $\alpha^* <1/2$.
}
\qed
\end{proof}
We can find $\alpha_\text{max}$ by solving $f'(\alpha) =0 $:  
\begin{align}\label{argmax}
\gamma &=\frac{4\,\alpha^6-16\,\alpha^5+26\,\alpha^3-16\,\alpha^2+1}{2\,\alpha^6-8\,\alpha^5-\,\alpha^4+14\,\alpha^3-10\,\alpha^2+2\,\alpha} .
\end{align}
\void{
For a given $\gamma$, the function  $f(\alpha)$ attains the only maximum on the solution of Equation (\ref{argmax}). Define $\alpha_{max}(\gamma)$ as the value of $\alpha$ that attains the only maximum. 
Since $f(0) = 0$ and $f(1/2)=0$ for any $0 \le \gamma \le 1$,
 there are solutions to Equation (\ref{argmax}) only if  $f(\alpha_{max}) > \kappa=M \frac{C}{B}$, which determines $M_\text{max}(\gamma)$.
}

\void{Theorem \ref{theorem:threshold_initial_share} states that, for a given value of $\gamma$, there is a threshold size of the attacking pool ($M_\text{max}$) such that if the initial size is above the threshold, all the honest miners leave and the system collapses.} 
With elastic hash supply, selfish mining attacks reduce the profitability of honest miners, making honest miners leave the system,  which in turn increases the attacker’s share, further decreasing profitability for honest miners. If the attacking pool's share is large enough, the negative propagation effect forces all  honest miners to leave the system. For example,
%\footnote{
%This analysis is based on the attacking strategy in \cite{eyal2014majority}, not based on the optimal %attacker strategy.  The actual equilibrium can be different.
%}
when $\gamma = 1$,% 
\void{solving equation (\ref{argmax})  with  $\gamma = 1$ will give us $\alpha_\text{max}$ and there exists a solution $H^*> M$ if and only if
\begin{equation}
    M \leq M_\text{max}(\gamma) = f(\alpha_\text{max}) \frac{B}{ C} .
\end{equation}
}
we find that\void{$\alpha_\text{max}$ is approximately $0.378$ and} $f(\alpha_\text{max})$ is approximately $0.292$. 
%Remember that $H^*+M=\frac{B}{C}$ in an equilibrium without selfish mining attacks. 
\void{
With $\gamma = 1$ (when all honest miners choose to mine on the attacking pool’s block), i
}
Since $H^*+M=\frac{B}{C}$ in the equilibrium without selfish mining attacks, this implies that if
the attacking pool's share is larger than 29.2\%, then the attack makes all the honest miners eventually leave the system. When $0\le \gamma \le 1$, $f(\alpha_{max})$ is decreasing in $\gamma$, ranging from $0.3475$ at $\gamma=0$ to $0.2919$ at $\gamma=1$.\void{ The result implies that if attacking pool's share is larger than 29.2\% of total mining power, the system can potentially collapse by selfish mining attack.}

%Lastly Lemma 1 summarizes stability of equilibria under elastic hash supply. 
When the system does not collapse, we can find a stable equilibrium from the honest miners' response.
It is straightforward to check that \cref{argmax} has only one solution in $0\le \alpha \le 1/2$.\footnote{We omit the details due to the restriction of space.}  This implies that we have only two equilibria $H^*_1$ and $H^*_2$ when $M< M_\text{max}$.  We assume $H^*_1 < H^*_2$ without loss of generality.  Since $f'(\frac{M}{H^*_1+M}) < 0$ and $f'(\frac{M}{H^*_2+M}) > 0$, we obtain the following proposition.

\begin{prop}
For any given $\gamma$ and $M < M_{max}$, there are two equilibria, $H^*_1$ and $H^*_2$ ($H^*_2>H^*_1$), where  $H^*_2$ is stable and $H^*_1$ is unstable.
\end{prop}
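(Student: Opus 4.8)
The plan is to reduce the stability question to the monotonicity structure of $f$ that already drives \cref{theorem:threshold_initial_share}. First I would rewrite the honest profit, using \cref{eq: equilibrium}, as $\mathcal{U}^S(H) = \frac{B}{M}\bigl(f(\alpha) - \kappa\bigr)$ with $\alpha = \frac{M}{H+M}$, and note that $H \mapsto \alpha$ is a strictly decreasing bijection from $(M,\infty)$ onto $(0,\tfrac12)$. Granting the deferred claim that \cref{argmax} has a unique root in $(0,\tfrac12)$, the function $f$ is unimodal on $[0,\tfrac12]$: it rises from $f(0)=0$ to a single interior maximum at $\alpha_\text{max}$ and then falls to $f(\tfrac12)=0$. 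Since $M < M_\text{max}$ gives $0 < \kappa = \frac{C}{B}M < f(\alpha_\text{max})$, the level $\kappa$ meets the graph of $f$ in exactly two points, one on each side of $\alpha_\text{max}$; pulling these back through the decreasing map yields exactly two equilibria $H^*_1 < H^*_2$, with $H^*_2$ corresponding to the smaller root $\alpha^*_2 < \alpha_\text{max}$ and $H^*_1$ to the larger root $\alpha^*_1 > \alpha_\text{max}$.

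Next I would make the stability notion precise through the free-entry adjustment dynamics that motivate the model: honest hash rate flows in when mining is profitable and out when it is not, so $\dot H$ shares the sign of $\mathcal{U}^S(H)$. Linearizing this one-dimensional system at an equilibrium, $H^*$ is locally asymptotically stable exactly when $\frac{d\mathcal{U}^S}{dH}(H^*) < 0$ and unstable when that derivative is positive.

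The remaining step is a sign computation. By the chain rule,
$$\frac{d\mathcal{U}^S}{dH} = \frac{B}{M}\,f'(\alpha)\,\frac{d\alpha}{dH}, \qquad \frac{d\alpha}{dH} = -\frac{M}{(H+M)^2} < 0,$$
so $\frac{d\mathcal{U}^S}{dH}$ carries the opposite sign to $f'(\alpha)$. At $H^*_2$ we have $\alpha^*_2 < \alpha_\text{max}$, where $f$ is increasing, hence $f'(\alpha^*_2) > 0$ and $\frac{d\mathcal{U}^S}{dH}(H^*_2) < 0$, giving stability. At $H^*_1$ we have $\alpha^*_1 > \alpha_\text{max}$, where $f$ is decreasing, hence $f'(\alpha^*_1) < 0$ and $\frac{d\mathcal{U}^S}{dH}(H^*_1) > 0$, giving instability. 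This reproduces the two sign conditions $f'(M/(H^*_1+M)) < 0$ and $f'(M/(H^*_2+M)) > 0$ quoted before the proposition.

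I expect the genuine obstacle to be the unimodality claim itself, i.e.\ that \cref{argmax} has a single root in $(0,\tfrac12)$, since everything else (counting equilibria and the chain-rule sign flip) is routine once unimodality is in hand. Unimodality is what guarantees both that there are exactly two equilibria and that each root sits on a definite side of $\alpha_\text{max}$, which is precisely what fixes the sign of $f'$, and hence of $\frac{d\mathcal{U}^S}{dH}$, at each equilibrium. The one point demanding care in the easy part is the orientation-reversing map $H \mapsto \alpha$: because it is decreasing, the larger equilibrium $H^*_2$ corresponds to the \emph{smaller} $\alpha$, so the stability ordering in $H$ is the reverse of the monotonicity pattern of $f$ in $\alpha$, and a sign slip here would invert the conclusion.
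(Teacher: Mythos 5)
Your proposal is correct and takes essentially the same route as the paper: the paper likewise reduces everything to the unimodality of $f$ (asserting the uniqueness of the critical point in $(0,\tfrac12)$ without proof, just as you grant it), notes that $M<M_\text{max}$ yields exactly two equilibria with $f'\bigl(M/(H^*_1+M)\bigr)<0$ and $f'\bigl(M/(H^*_2+M)\bigr)>0$, and concludes stability of $H^*_2$ and instability of $H^*_1$ from the free-entry sign argument (miners enter when profit is positive, leave when negative). Your write-up merely makes explicit the orientation-reversing map $H\mapsto\alpha$, the dynamics $\dot H \propto \mathcal{U}^S(H)$, and the linearization criterion that the paper leaves informal.
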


%Instead of giving a formal proof, we use Figure~\ref{f:Theory_honest_revenue_cost} to explain Lemma 1. 
Figure~\ref{f:Theory_honest_revenue_cost} illustrates the honest miners' per-hash revenue and cost, given parameters $B$, $C$, $\gamma$, and $M$.\footnote{\textcolor{black}{We set $B=169,441  \text{ USD}$, $C=1.31 \text{ USD/(PH/s)}$, $\gamma=1$, and $M=0.25 \frac{B}{C}$ for Figure~\ref{f:Theory_honest_revenue_cost}. We computed values of $B$ based on Bitcoin price and block-reward and $C$ based on the most competitive offer from nicehash.com on December 29th, 2020.}} Under the free entry condition, the equilibria correspond to points $H^*_1$ and $H^*_2$ where the revenue curve intersects the cost, i.e., points with zero profit. In this case, equilibrium $H^*_2$ is stable, while $H^*_1$ is not. When honest miners' mining power increases (decreases) by any small amount  $\epsilon>0$ from point $H^*_1$, positive (negative) profit will be generated and more honest miners will enter (leave) the system, ending up reaching equilibrium $H^*_2$ (or an equilibrium $H=0$).\footnote{While $H=0$ is an equilibrium, we do not consider cases where $H<M$ in our analysis since it is well known that such cases are unsustainable.} On the other hand, when mining power increases (decreases) from point $H^*_2$, negative (positive) profit will be generated and honest miners leave (enter) the system. Therefore, equilibrium $H^*_2$ is the only stable equilibrium.  
\begin{figure}[h!]
	\centering
	\includegraphics[width=0.7\textwidth]{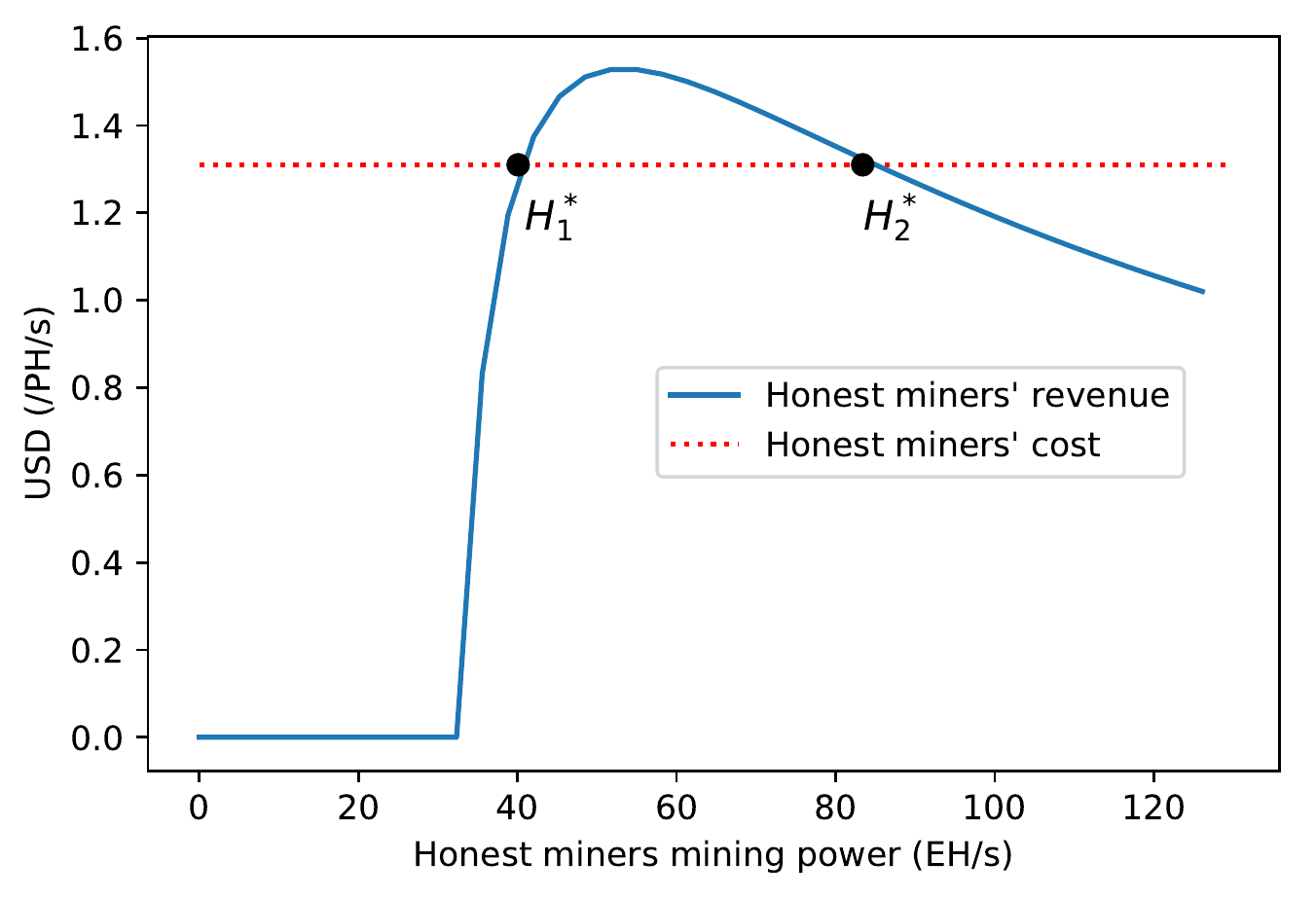}
	\vspace{-5mm}
	\caption{\textcolor{black}{Honest miners' per hash-rate revenue and cost (PH/s: peta-hash per second, EH/s: exa-hash per second)}} 
	\label{f:Theory_honest_revenue_cost}
	\vspace{-2em}
\end{figure}

\section{Conclusions}\label{sec:Conclusions}
%\vspace{-0.15cm}
The majority of selfish mining literature assumes \emph{fixed} total hash power. Our results show that  \emph{elastic} hash supply can significantly exacerbate the impact of selfish mining.  We (i) showed empirically that hash supply is elastic with respect to the miners' per-hash  revenue and (ii) theoretically derived a threshold such that if the attacker's initial share of the total mining power is above the threshold, all the honest miners will leave and the chain collapses. 

Limitations of our theoretical analysis lead us to future work. First, whether the equilibrium can be reached depends on the starting state. For example, if $H = 0$, then it will not be profitable for any individual honest miner to start mining if its hash power is less than $M$ (regardless of the relation between $M$ and $\frac{B}{C}$). Second, our analysis ignored transient effects, which may prevent reaching particular equilibria. For example, if difficulty adjustments are delayed, an attacker with $M \leq f(\alpha_\text{max})\frac{B}{C}$ might be able to ``chase away'' honest miners before difficulty adjusts to incentivize them to stay. Our future work includes extending our model to dynamic analysis considering initial state and transient effects.
%\vspace{-0.4cm}

\bibliographystyle{splncs04}
\bibliography{main.bib}
\end{document}